\newtheorem{thm}{Theorem}
\begin{document}
\title[Hidden Markov Chains with rare transitions]{Entropy Rate for
Hidden Markov Chains with rare transitions} \date{\today}
\author{Yuval Peres}
\address{Microsoft Research, One Microsoft Way, Redmond, WA 98052,
  USA}
\email{peres(a)microsoft.com}
\author{Anthony Quas}
\address{Department of Mathematics and Statistics, University of
  Victoria, Victoria, BC V8W 3R4, CANADA}
\email{aquas(a)uvic.ca}
\thanks{AQ's research was supported by NSERC; The authors thank BIRS
  where this research was conducted.}

\begin{abstract}
We consider Hidden Markov Chains obtained by passing a Markov Chain with
rare transitions through a noisy memoryless channel. We obtain asymptotic
estimates for the entropy of the resulting Hidden Markov Chain as the
transition rate is reduced to zero.
\end{abstract}

\maketitle

Let $(X_n)$ be a Markov chain with finite state space $S$ and transition
matrix $P(p)$ and let $(Y_n)$ be the Hidden Markov chain observed by
passing $(X_n)$ through a homogeneous noisy memoryless channel (i.e. $Y$
takes values in a set $T$, and there exists a matrix $Q$ such that
$\mathbb{P}(Y_n=j|X_n=i,X_{-\infty}^{n-1},X_{n+1}^{\infty}
,Y_{-\infty}^{n-1},Y_{n+1}^\infty) =Q_{ij}$). We make the additional
assumption on the channel that the rows of $Q$ are distinct. In this case
we call the channel \emph{statistically distinguishing}.

We assume that $P(p)$ is of the form $I+pA$ where $A$ is a matrix with
negative entries on the diagonal, non-negative entries in the
off-diagonal terms and zero row sums. We further assume that for small
positive $p$, the Markov chain with transition matrix $P(p)$ is
irreducible. Notice that for Markov chains of this form, the invariant
distribution $(\pi_i)_{i\in S}$ does not depend on $p$. In this case, we
say that for small positive values of $p$, the Markov chain is in a
\emph{rare transition regime}.

We will adopt the convention that $H$ is used to denote the entropy of a
finite partition, whereas $h$ is used to denote the entropy of a process
(the \emph{entropy rate} in information theory terminology). Given an
irreducible Markov chain with transition matrix $P$, we let $h(P)$ be the
entropy of the Markov chain (i.e. $h(P)=-\sum_{i,j} \pi_iP_{ij}\log
P_{ij}$ where $\pi_i$ is the (unique) invariant distribution of the
Markov chain and as usual we adopt the convention that $0\log 0=0$). We
also let $H_{\text{chan}}(i)$ be the entropy of the output of the channel
when the input symbol is $i$ (i.e. $H_{\text{chan}}(i)=-\sum_{j\in
T}Q_{ij}\log Q_{ij}$). Let $h(Y)$ denote the entropy of $Y$ (i.e.
$h(Y)=-\lim_{N\to\infty} \frac1N \sum_{w\in T^N}\mathbb{P}(Y_1^N=w)\log
\mathbb{P}(Y_1^N=w)$).

\begin{thm}\label{thm:main}
Consider the Hidden Markov Chain $(Y_n)$ obtained by observing a
Markov chain with irreducible transition matrix $P(p)=I+Ap$ through
a statistically distinguishing channel with transition matrix $Q$.
Then there exists a constant $C>0$ such that for all small $p>0$,
\begin{equation}\label{eq:mainbound}
  h(P(p))+\sum_i\pi_i H_{\text{chan}}(i)-Cp \le h(Y)
  \le h(P(p))+\sum_i\pi_i H_{\text{chan}}(i),
\end{equation}
where $(\pi_i)_{i\in S}$ is the invariant distribution of $P(p)$.

If in addition the channel has the property that there exist $i,i'$
and $j$ such that $P_{ii'}>0$, $Q_{ij}>0$ and $Q_{i'j}>0$, then there
exists a constant $c>0$ such that
\begin{equation}\label{eq:lowerbound}
  h(Y)\le h(P(p))+\sum_i\pi_i H_\text{chan}(i)-cp.
\end{equation}
\end{thm}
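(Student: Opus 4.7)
The plan is to express all three bounds through the conditional entropy rate $h(X\mid Y)$. The chain-rule identity $h(X,Y)=h(X)+h(Y\mid X)=h(Y)+h(X\mid Y)$, combined with $h(X)=h(P(p))$ and the memorylessness of the channel (which gives $h(Y\mid X)=H(Y_0\mid X_0)=\sum_i\pi_i H_{\text{chan}}(i)$), yields
\begin{equation*}
  h(Y) \;=\; h(P(p)) + \sum_i \pi_i H_{\text{chan}}(i) - h(X\mid Y).
\end{equation*}
The right-hand inequality in \eqref{eq:mainbound} is now immediate from $h(X\mid Y)\ge 0$. The left-hand inequality in \eqref{eq:mainbound} is equivalent to showing $h(X\mid Y)\le Cp$, and the strict bound \eqref{eq:lowerbound} to $h(X\mid Y)\ge cp$ under the added hypothesis.

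To show $h(X\mid Y)\le Cp$, I would decompose $X_1^N$ via its transition indicators $Z_t=\mathbf{1}\{X_t\ne X_{t-1}\}$ and the sequence $W$ of states on maximal constant blocks. Since $(Z_1^N, W)$ determines $X_1^N$ and conversely,
\begin{equation*}
  H(X_1^N\mid Y_1^N) \;=\; H(Z_1^N\mid Y_1^N) + H(W\mid Z_1^N, Y_1^N).
\end{equation*}
The second term is $O(Np^2)$: given the block structure $Z$, a block of length $L_b$ furnishes $L_b$ i.i.d.\ samples from a row of $Q$, and distinctness of the rows yields exponentially decaying posterior error in $L_b$; summing over the $\Theta(Np)$ blocks, whose lengths are geometric of mean $1/p$, produces the stated order. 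For the first term the crux is a \emph{change-point localization estimate}: for small $p$, the posterior on each transition position given $Y$ has exponentially decaying tails about its true value, so each transition contributes posterior entropy $O(1)$, and summation over the $\Theta(Np)$ transitions gives $H(Z_1^N\mid Y_1^N)=O(Np)$. Dividing by $N$ and passing to the limit yields $h(X\mid Y)=O(p)$.

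For the strict improvement I would exhibit an explicit ``ambiguous'' event of probability $\Theta(p)$ carrying constant-order conditional entropy. Fix $i,i',j$ with $A_{ii'},Q_{ij},Q_{i'j}>0$ as in the hypothesis and a large constant $K$. Let $E$ be the event $\{X_{-K}^{-1}=i^K,\ X_1^K=(i')^K,\ Y_0=j\}$, intersected with the requirement that $Y_{-K}^{-1}$ and $Y_1^K$ lie in ``typical'' sets for the rows $Q_{i,\cdot}$ and $Q_{i',\cdot}$ respectively. Up to relative error $O(p)$, the probability of $E$ is carried by two scenarios: $X_0=i$ with the transition at time $1$, and $X_0=i'$ with the transition at time $0$. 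Each contains a single $i\to i'$ transition and therefore has equal Markov weight; the conditional $Y$-likelihoods agree at every time except $t=0$, where they are $Q_{ij}$ and $Q_{i'j}$. Thus the posterior on $X_0$ conditional on $E$ is, to leading order, the Bernoulli law with parameter $Q_{ij}/(Q_{ij}+Q_{i'j})\in(0,1)$, which has a fixed positive entropy $H_0$. Since $\mathbb{P}(E)=\Theta(p)$, we conclude $h(X\mid Y)\ge H_0\,\mathbb{P}(E)=\Omega(p)$.

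The central obstacle is the sharp $O(p)$ (rather than $O(p\log(1/p))$) upper bound on $h(X\mid Y)$. A naive sliding-window MAP reconstruction combined with Fano's inequality would already yield $O(p\log(1/p))$, the extra logarithmic factor arising from the binary entropy $h_2(P_e)$ with $P_e\sim p$. Eliminating this loss requires the exponential-tail change-point localization estimate, which in turn demands the quantitative distinctness of the rows of $Q$ (for instance via the Bhattacharyya coefficient or KL divergence between rows) uniform in $p$ and in the pair of rows being compared. Propagating this local estimate across the $\Theta(Np)$ transitions in $[1,N]$ is the heart of the proof.
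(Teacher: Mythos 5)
Your overall architecture coincides with the paper's: both reduce everything to $h(X\mid Y)$ via $h(Y)=h(P(p))+\sum_i\pi_iH_{\text{chan}}(i)-h(X\mid Y)$, both get the upper bound in \eqref{eq:mainbound} from $h(X\mid Y)\ge 0$, and your $\Omega(p)$ lower bound on $h(X\mid Y)$ via an ambiguous event of probability $\Theta(p)$ on which the posterior of $X_0$ is asymptotically Bernoulli$\bigl(Q_{ij}/(Q_{ij}+Q_{i'j})\bigr)$ is essentially the paper's argument (the paper conditions on the event $\{X_{-1}=i,\ X_1=i',\ Y_0=j\}$ together with the full $\sigma$-algebra of $Y$ and $(X_n)_{n\ne 0}$, which is cleaner than your typical-set conditioning but identical in substance). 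The difference is that the one step you yourself flag as ``the heart of the proof'' --- the exponential-tail change-point localization giving $H(Z_1^N\mid Y_1^N)=O(Np)$ --- is asserted rather than proved, and it is exactly the part the paper spends most of its effort on. As written, ``each transition contributes posterior entropy $O(1)$, and summation over the $\Theta(Np)$ transitions gives $O(Np)$'' glosses over three concrete difficulties: the posterior must also get the \emph{number} of transitions right (spurious and missed transitions), nearby transitions cannot be localized independently, and ``summation'' of per-transition contributions needs a chain-rule justification.

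The paper resolves these by localizing to blocks of length $L=|\log p|^4$ and partitioning the space into events $E_{\text{m}}$ (two or more transitions in the block), $E_{\text{b}}$ (one transition within $K=|\log p|^2$ of the block boundary), and a good event, whose bad-case contributions are shown to be $o(p)$ or $O(pK\log(pK))$ and hence negligible at scale $pL$; on the good event it runs a maximum-likelihood reconstruction over the set $B$ of words with at most one transition. The quantitative form of your localization lemma is then: for $u,v\in B$, conditioned on $X_0^{L-1}=u$, the log-likelihood ratio $L_v-L_u$ is a sum of $\delta(u,v)$ independent terms each with strictly negative mean (Gibbs' inequality, using distinctness of the rows of $Q$), so Hoeffding gives $\mathbb{P}(Z_0^{L-1}=v\mid X_0^{L-1}=u)\le C\eta^{\delta(u,v)}$; this makes the offset $N$ between the true and estimated transition location have exponentially decaying law and hence $O(1)$ entropy. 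You correctly identified that quantitative row-distinctness (KL divergence between rows) is the needed input, so the gap is one of execution rather than of strategy; but to complete your argument you would need to supply this large-deviation estimate and an accounting scheme (such as the paper's block partition and equation \eqref{eq:entpart}) that controls the exceptional events where localization fails or transitions collide.
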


The entropy rate in the rare transition regime was considered
previously in the special case of a 0--1 valued Markov Chain with
transition matrix $
P(p)=\left(\begin{smallmatrix}1-p&p\\p&1-p\end{smallmatrix}\right) $
and where the channel was the binary symmetric channel with crossover
probability $\epsilon$ (i.e.  $ Q=\left(\begin{smallmatrix}
    1-\epsilon&\epsilon\\\epsilon&1-\epsilon\end{smallmatrix} \right)
$). It is convenient to introduce the notation $g(p)=-p\log
p-(1-p)\log(1-p)$.  In \cite{NOW}, Nair, Ordentlich and Weissman
proved that $g(\epsilon)-(1-2\epsilon)^2p\log p/(1-\epsilon)\le
h(Y)\le g(p)+g(\epsilon)$. For comparison, with our result, this is
essentially of the form $g(\epsilon)+a(\epsilon)g(p)\le h(Y)\le
g(p)+g(\epsilon)$ where $a(\epsilon)<1$ but $a(\epsilon)\to 1$ as
$\epsilon\to 0$ (i.e.  $h(Y)=g(p)+g(\epsilon)-O(p\log p)$). A second
paper due to Chigansky \cite{Chigansky} shows that
$g(\epsilon)+b(\epsilon)g(p)\le h(Y)$ for a function $b(\epsilon)<1$
satisfying $b(\epsilon)\to 1$ as $\epsilon\to 1/2$ (again giving an
$O(p\log p)$ error). Our result states in this case that there exist
$C>c>0$ such that $g(p)+g(\epsilon)-Cp\le h(Y)\le g(p)+g(\epsilon)-cp$
(i.e. $h(Y)=g(p)+g(\epsilon)-\Theta(p)$).

We note that as part of the proof we attempt a reconstruction of $(X_n)$
from the observed data $(Y_n)$. In our case, the reconstruction of the
$n$th symbol of $X_n$ depended on past and future values of $Y_m$. A
related but harder problem of filtering is to try to reconstruct $X_n$
given only $Y_1^n$. This problem was addressed in essentially the same
scenario by Khasminskii and Zeitouni \cite{KhasminskiiZeitouni}, where
they gave a lower bound for the asymptotic reconstruction error of the
form $Cp|\log p|$ for an explicit constant $C$ (i.e. for an arbitrary
reconstruction scheme, the probability of wrongly guessing $X_n$ is
bounded below in the limit as $n\to\infty$ by $Cp|\log p|$). Our scheme
shows that if one is allowed to use future as well as past observations
then the asymptotic reconstruction error is $O(p)$. This was previously
observed by Shue, Anderson and DeBruyne in \cite{SAB} who used a similar
scheme to ours.

Before giving the proof of the theorem, we discuss the strategy. We start
from the equality
\begin{equation}\label{eq:basic}
  h(X)+h(Y|X)=h(X,Y)=h(Y)+h(X|Y).
\end{equation}
Since $h(X)$ and $h(Y|X)$ are known to be $h(P(p))$ and $\sum_i \pi_i
H_\text{chan}(i)$, the estimates for the entropy of $Y$ are obtained by
estimating $h(X|Y)$. The inequality \eqref{eq:mainbound} is equivalent to
showing that $0\le h(X|Y)\le Cp$ for some $C>0$. The lower bound here is
trivial, whereas the main part of the proof is the upper bound for
$h(X|Y)$ (giving a lower bound for $h(Y)$). The second part of the proof,
showing \eqref{eq:lowerbound} lowering the upper bound for $h(Y)$ under
additional conditions, is proved by showing $h(X|Y)\ge cp$ for some
$c>0$.

We explain briefly the underlying idea of the upper bound $h(X|Y)=O(p)$.
Since the transitions in the $(X_n)$ sequence are rare, given a
realization of $(Y_n)$, the $Y_n$ values allow one to guess (using the
statistical-distinguishing property) the $X_n$ values from which the
$Y_n$ values are obtained. This provides for an accurate reconstruction
except that where there is a transition in the $X_n$'s there is some
uncertainty as to its location as estimated using the $Y_n$'s. It turns
out that by using maximum likelihood estimation, the transition locations
may be pinpointed up to an error with exponentially small tail. Since the
transitions occur with rate $p$, there is an $O(p)$ entropy error in
reconstructing $(X_n)$ from $(Y_n)$.

We make use of a number of notational conventions, some standard and
others less so. Firstly we shall write denote events by set notation so
that $\{X_0=X_2\}$ denotes the event that the random variables $X_0$ and
$X_2$ agree. We make extensive use of relative entropy. For two
partitions $\mathcal P$ and $\mathcal Q$, the relative entropy is defined
by $H(\mathcal Q|\mathcal P)=H(\mathcal P\vee\mathcal Q)-H(\mathcal P)$.
When conditioning, we shall not distinguish between random variables and
the partitions and $\sigma$-algebras that they induce (so that for
example $H(X_0^{N-1})$ is $-\sum_{w\in
S^N}\mathbb{P}(X_0^{N-1}=w)\log\mathbb{P}(X_0^{N-1}=w)$ and $H(X_0|Y)$ is
the conditional entropy of $X_0$ relative to the $\sigma$-algebra
generated by $\{Y_n\colon n\in\mathbb{Z}\}$). On the other hand if $A$ is
an event, we use $H(\mathcal P|A)$ to mean the entropy of the partition
with respect to the conditional measure $\mathbb{P}_A(B)=\mathbb{P}(A\cap
B)/\mathbb{P}(A)$. For jointly stationary processes
$(X_n)_{n\in\mathbb{Z}}$ and $(Y_n)_{n\in\mathbb{Z}}$, the relative
entropy of the processes is given by
$h(Y|X)=h((X_n,Y_n)_{n\in\mathbb{Z}})-h((X_n)_{n\in\mathbb{Z}})=
\lim_{N\to\infty}(1/N)(H(X_0^{N-1}\vee
Y_0^{N-1})-H(X_0^{N-1}))=\lim_{N\to\infty}(1/N)H(Y_0^{N-1}|X_{-\infty}^\infty)
=H(Y_0|X_{-\infty}^\infty,Y_{-\infty}^{-1})$.

Given a measurable partition $\mathcal Q$ of the space, an event $A$ and
a $\sigma$-algebra $\mathcal F$ we will write $H(\mathcal Q|\mathcal
F|A)$ for the entropy of $\mathcal Q$ relative to $\mathcal F$ with
respect to $\mathbb{P}_A$. In the case where $A$ is $\mathcal
F$-measurable (as will always be the case in what follows), we have
\begin{equation*}
  H(\mathcal Q|\mathcal F|A)=\int\left(-\sum_{B\in\mathcal
  Q}\mathbb{P}(B|\mathcal F)\log\mathbb{P}(B|\mathcal F)\right)\,d\mathbb{P}_A.
\end{equation*}

 If $A_1,\ldots,A_k$ form an $\mathcal
F$-measurable partition of the space, then we have the following
equality:
\begin{equation}\label{eq:entpart}
  H(\mathcal Q|\mathcal F)=\sum_{j=1}^k \mathbb{P}(A_j)H(\mathcal Q|\mathcal
  F|A_j).
\end{equation}

\begin{proof}[Proof of Theorem \ref{thm:main}]

Note that $((X_n,Y_n))_{n\in\mathbb{Z}}$ forms a Markov chain with
transition matrix $\bar P$ given by $\bar
P_{(i,j),(i',j')}=P_{ii'}Q_{i'j'}$ and invariant distribution $\bar
\pi_{(i,j)}=\pi_iQ_{ij}$. The standard formula for the entropy of a
Markov chain then gives $h(X,Y)=h(P(p))+\sum_i\pi_iH_{\text{chan}}(i)$.
Since $h(X,Y)=h(Y)+h(X|Y)$, one obtains
\begin{equation}\label{eq:basic2}
  h(Y)=h(X,Y)-h(X|Y)= h(P(p))+\sum_i\pi_iH_\text{chan}(i)-h(X|Y).
\end{equation}

This establishes the upper bound in the first part of the theorem.

We now establish the lower bound. We are aiming to show $h(X|Y)=O(p)$
(for which it suffices to show $H(X_0^{L-1}|Y)=O(Lp)$ for some $L$).
Setting $L=|\log p|^4$ and letting $\mathcal P$ be a suitable partition,
we estimate $H(X_0^{L-1}|Y,\mathcal P)$ and use the inequality

\begin{equation}\label{eq:entpert}
H(X_0^{L-1}|Y)\le H(X_0^{L-1}|Y,\mathcal P)+H(\mathcal P).
\end{equation}

We define the partition $\mathcal P$ as follows: Set $K=|\log p|^2$ and
let $\mathcal P=\{ E_\text{m}, E_\text{b},E_\text{g1},E_\text{g2}\}$.
Here $E_\text{m}$ (for many) is the event that there are at least two
transitions in $X_0^{L-1}$, $E_\text{b}$ (for boundary) is the event that
there is exactly one transition and that it takes place within a distance
$K$ of the boundary of the block and finally $E_\text{g}$ (for good) is
the event that there is at most one transition and if it takes place,
then it occurs at a distance at least $K$ from the boundary of the block.
This will later be subdivided into $E_\text{g1}$ and $E_\text{g2}$.

If $E_\text{m}$ holds then we bound the entropy contribution by the
entropy of the equidistributed case whereas if $E_\text{b}$ holds, there
are $2K|S|(|S|-1)=O(K)$ possible values of $X_0^{L-1}$. This yields the
following estimates:
\begin{align}
\mathbb{P}(E_\text{m})&=O(p^2L^2)=o(p) \label{eq:Emprob}\\
H(X_0^{L-1}|E_\text{m})&\le L\log |S|  \label{eq:Ement}  \\
\mathbb{P}(E_\text{b})&=O(pK)\label{eq:Ebprob}  \\
H(X_0^{L-1}|E_\text{b})&=O(\log K)\label{eq:Ebent}.
\end{align}

It follows that $\mathbb{P}(E_g)=1-O(pK)$. Given that the event
$E_\text{g}$ holds, the sequence $X_0^{L-1}$ belongs to $B=\{a^L\colon
a\in S\}\cup \{a^ib^{L-i}\colon a,b\in S, K\le i\le L-K\}$.

Given a sequence $u\in B$,  the log-likelihood of $u$ being the input
sequence yielding the output $Y_0^{L-1}$ is
$L_u(Y_0^{L-1})=\sum_{i=0}^{L-1}\log Q_{u_iY_i}$. We define $Z_0^{L-1}$
to be the sequence in $B$ for which $L_Z(Y_0^{L-1})$ is maximized
(breaking ties lexicographically if necessary). We will then show using
large deviation methods that when $E_\text{g}$ holds, $Z_0^{L-1}$ is a
good reconstruction of $X_0^{L-1}$ with small error.

We calculate for $u,v\in B$,
\begin{align*}
&\mathbb{P}\left(L_v(Y_0^{L-1})\ge
L_u(Y_0^{L-1})|X_0^{L-1}=u\right)\\=&\mathbb{P}\left(\sum_{i=0}^{L-1}\log
(Q_{v_iY_i}/Q_{u_iY_i})\ge 0|X_{0}^{L-1}=u\right)\\
=&\mathbb{P}\left(\sum_{i\in \Delta} \log (Q_{v_iY_i}/Q_{u_iY_i})\ge
0|X_{0}^{L-1}=u\right),
\end{align*}
where $\Delta=\{i\colon u_i\ne v_i\}$. For each $i\in \Delta$, given
that $X_0^{L-1}=u$, we have that $\log(Q_{v_iY_i}/Q_{u_iY_i})$ is an
independent random variable taking the value $\log
(Q_{v_ij}/Q_{u_ij})$ with probability $Q_{u_ij}$.

It is well known (and easy to verify using elementary calculus) that for
a given probability distribution $\pi$ on a set $T$, the probability
distribution $\sigma$ maximizing $\sum_{j\in T}\pi_j\log
(\sigma_j/\pi_j)$ is $\sigma=\pi$ (for which the maximum is 0).
Accordingly we see that given that $X_0^{L-1}=u$,
$L_v(Y_0^{L-1})-L_u(Y_0^{L-1})$ is the sum of $|\Delta|$ random
variables, each having one of $|S|(|S|-1)$ distributions, each with
negative expectation. It follows from Hoeffding's Inequality
\cite{Hoeffding} that there exist $C>0$ and $\eta<1$ independent of $p$
such that $\mathbb{P}(L_v(Y_0^{L-1})\ge L_u(Y_0^{L-1})|X_0^{L-1}=u)\le
C\eta^{|\Delta|}$ .

We deduce that for $u,v\in B$
\begin{equation}\label{eq:LD}
\mathbb{P}(Z_0^{L-1}=v|X_0^{L-1}=u)\le C\eta^{\delta(u,v)},
\end{equation}
where $\delta(u,v)$ is the number of places in which $u$ and $v$ differ.

We split $E_g$ into two subsets:
\begin{align*}
E_\text{g1}&=E_g\cap\{\delta(X_0^{L-1},Z_0^{L-1})<K\};\text{ and }\\
E_\text{g2}&=E_g\cap\{\delta(X_0^{L-1},Z_0^{L-1})\ge K\}.
\end{align*}
Since there are less than $|S|^2L$ elements in $B$, we see using
\eqref{eq:LD} and recalling that $K=|\log p|^2$ that
\begin{align}\label{eq:Eg2prob}
\mathbb{P}(E_\text{g2})&\le |S|^2LC\eta^K=o(p)\\
H(X_0^{L-1}|E_\text{g2})&\le \log(|S|^2L).\label{eq:Eg2ent}
\end{align}

Combining \eqref{eq:Eg2prob} with \eqref{eq:Ebprob} and \eqref{eq:Emprob}
we see that $\mathbb{P}(E_\text{g1})=1-O(pK)$. We then obtain
\begin{equation}\label{eq:entP}
H(\mathcal P)=O(pK\log(pK))=o(pL).
\end{equation}

Conditioned on being in $E_\text{g1}$, if $Z_0=Z_{L-1}$ then
$X_0^{L-1}=Z_0^{L-1}$ so we have
\begin{equation}\label{eq:notrans}
H(X_0^{L-1}|Z_0^{L-1}\vee\mathcal P|E_\text{g1}\cap \{Z_0=Z_{L-1}\})=0.
\end{equation}

Given that $E_\text{g1}$ holds, if $X_0^{L-1}=a^ib^{L-i}$ then
$Z_0^{L-1}$ must be of the form $a^jb^{L-j}$ for some $j$ satisfying
$-K<j-i<K$. Denote this difference $j-i$ by the random variable $N$.  We
have
\begin{align*}
&H(X_0^{L-1}|Y_0^{L-1}\vee \mathcal P|E_\text{g1}\cap \{Z_0\ne
Z_{L-1}\})\\
\le &H(X_0^{L-1}|Z_0^{L-1}\vee \mathcal P|E_\text{g1}\cap \{Z_0\ne
Z_{L-1}\})\\
=&H(N|Z_0^{L-1}\vee\mathcal P|E_\text{g1}\cap\{Z_0\ne Z_{L-1}\})\\
\le& H(N|E_\text{g1}\cap\{Z_0\ne Z_{L-1}\}).
\end{align*}
where the first inequality follows because $Z_0^{L-1}$ is determined by
$Y_0^{L-1}$ so the partition generated by $Y_0^{L-1}$ is finer than that
generated by $Z_0^{L-1}$; and the equality follows because given
$Z_0^{L-1}$ and conditioned on being in $E_\text{g1}$, knowing $N$ is
sufficient to reconstruct $X_0^{L-1}$ so the partition generated by $N$
is the same as the partition generated by $X_0^{L-1}$.

Since $E_\text{g1}\cap\{Z_0\ne Z_{L-1}\}=E_\text{g1}\cap \{X_0\ne
X_{L-1}\}$, we have for $|k|<K$, $\mathbb{P}(N=k|E_\text{g1}\cap\{Z_0\ne
Z_{L-1}\})=\mathbb{P}(N=k|E_\text{g1}\cap\{X_0\ne X_{L-1}\})$. From
\eqref{eq:LD} this is bounded above by $C\eta^{|k|}$. Since a
distribution with these bounds has entropy bounded above independently of
$p$, it follows from this that $H(N|E_\text{g1}\cap\{Z_0\ne
Z_{L-1}\})=O(1)$ and hence that
\begin{equation}\label{eq:htrans}
H(X_0^{L-1}|Y_0^{L-1}\vee\mathcal P|E_\text{g1}\cap\{Z_0\ne
Z_{L-1}\})=O(1).
\end{equation}
Finally we have $\mathbb{P}(E_\text{g1}\cap \{Z_0\ne Z_{L-1}\})=O(pL)$

We now have $H(X_0^{L-1}|Y_0^{L-1})\le H(X_0^{L-1}|Y_0^{L-1}\vee\mathcal
P)+H(\mathcal P)$. We estimate the right side using \eqref{eq:entpart},
splitting the space up into the sets $E_{\text b}$, $E_{\text m}$,
$E_{\text{g2}}$, $E_{\text{g1}}\cap\{Z_0=Z_{L-1}\}$ and
$E_{\text{g1}}\cap\{Z_0\ne Z_{L-1}\}$. All of these sets are
$Y_0^{L-1}\vee\mathcal P$ measurable. Calculating the contribution to the
entropy from each of the sets, each part contributes at most $O(pL)$
yielding the estimate $H(X_0^{L-1}|Y_0^{L-1})=O(pL)$, so that
$h(X|Y)=O(p)$ as required. This completes the first part of the proof.

For the second part of the proof, suppose that the additional properties
are satisfied (the existence of $i,i'$ and $j$ such that $P_{ii'}>0$,
$Q_{ij}>0$ and $Q_{i'j}>0$). We need to show that $h(X|Y)\ge cp$ for some
$c>0$ or equivalently that $H(X_0|Y,X_{-\infty}^{-1})\ge cp$. In fact, we
show the stronger statement: $H(X_0|Y,(X_n)_{n\ne 0})\ge cp$. Let $A$ be
the event that $X_{-1}=i$ and $X_1=i'$ and $Y_0=j$. We now estimate
$H(X_0|Y,(X_n)_{n\ne 0}|A)$.

For $x\in A$, we have
\begin{align*}
\mathbb{P}(X_0=i|Y,(X_n)_{n\ne 0})(x)&=\frac{P_{ii}P_{ii'}Q_{ij}}
{P_{ii}P_{ii'}Q_{ij}+P_{ii'}P_{i'i'}Q_{i'j}+
\sum_{k\not\in\{i,i'\}} P_{ik}P_{ki'}Q_{kj}}\\
\mathbb{P}(X_0=i'|Y,(X_n)_{n\ne 0})(x)
&=\frac{P_{ii'}P_{i'i'}Q_{i'j}}{P_{ii}P_{ii'}Q_{ij}+
P_{ii'}P_{i'i'}Q_{i'j}+\sum_{k\not\in\{i,i'\}} P_{ik}P_{ki'}Q_{kj}}.
\end{align*}

As $p\to 0$, we have $\mathbb{P}(X_0=i|Y,(X_n)_{n\ne 0})(x)\to
Q_{ij}/(Q_{ij}+Q_{i'j})$ and $\mathbb{P}(X_0=i'|Y,(X_n)_{n\ne 0})(x)\to
Q_{i'j}/(Q_{ij}+Q_{i'j})$. From this we see that $H(X_0|Y,(X_n)_{n\ne
0})|A)$ converges to a non-zero constant as $p\to 0$. Since $A$ has
probability $\Omega(p)$, applying \eqref{eq:entpart} we obtain the lower
bound $h(X|Y)\ge cp$. From this we deduce the claimed upper bound for
$h(Y)$:
\begin{equation*}
  h(Y)\le h(X)+\sum_i \pi_i H_\text{chan}(i)-cp.
\end{equation*}

In this case we therefore have $h(Y)=h(X)+\sum_i\pi_i
H_\text{chan}(i)+\Theta(p)$. This completes the proof of the theorem.

\end{proof}

\bibliographystyle{amsplain}

\providecommand{\bysame}{\leavevmode\hbox to3em{\hrulefill}\thinspace}
\providecommand{\MR}{\relax\ifhmode\unskip\space\fi MR }
% \MRhref is called by the amsart/book/proc definition of \MR.
\providecommand{\MRhref}[2]{%
  \href{http://www.ams.org/mathscinet-getitem?mr=#1}{#2}
}
\providecommand{\href}[2]{#2}

\end{document}